\newcommand{\remove}[1]{}
\def\fskip#1{}
\newtheorem{theorem}{Theorem}
\newtheorem{corollary}{Corollary}
\newtheorem{definition}{Definition}
\newtheorem{example}{Example}
\newtheorem{lemma}{Lemma}
\newtheorem{remark}{Remark}
\def\1{{\bf 1}}
\def\R{\mathbb{R}}
\begin{document}
\title{An Approximation Algorithm and Price of Anarchy for the Binary-Preference Capacitated Selfish Replication Game*}
\author{\authorblockN{Seyed Rasoul Etesami, Tamer Ba\c{s}ar}\\
  \authorblockA{Coordinated Science Laboratory, University of Illinois at Urbana-Champaign,  Urbana, IL 61801\\
     Email: (etesami1, basar1)@illinois.edu}
\thanks{*Research supported in part by the ``Cognitive \& Algorithmic Decision Making" project grant through the College of Engineering of the University of Illinois, and in part 
by AFOSR MURI Grant FA 9550-10-1-0573 and NSF grant CCF 11-11342.}
}
\maketitle
\begin{abstract}
We consider in this paper a simple model for human interactions as service providers of different resources over social networks, and study the dynamics of selfish behavior of such social entities using a game-theoretic model known as binary-preference capacitated selfish replication (CSR) game. It is known that such games have an associated ordinal potential function, and hence always admit a pure-strategy
Nash equilibrium (NE). We study the price of anarchy of such
games, and show that it is bounded above by 3; we further provide some
instances for which the price of anarchy is at least 2. We also devise a quasi-polynomial
algorithm $\mathcal{O}\big(n^{2+\ln D}\big)$ which can find, in a
distributed manner, an allocation profile that is within a constant factor of
the optimal allocation, and hence of any pure-strategy Nash equilibrium of
the game, where the parameters $n$, and $D$ denote, respectively, the
number of players, and the diameter of the network. We further show that when the underlying network has a tree structure, every globally optimal allocation is a Nash equilibrium, which can be reached in only linear time.
\end{abstract}
\begin{keywords}
Capacitated selfish replication game; pure Nash equilibrium (NE); potential function; quasi-polynomial algorithm; price of anarchy; optimal allocation.
\end{keywords}

\section{Introduction}
In recent years, there has been a wide range of studies on the role of social and distributed networks in various disciplinary areas such as economics, computer science, epidemiology, and engineering. In particular, availability of large data from online social networks and advances in control of distributed systems have drawn the attention of many researchers to model various phenomena in social and distributed networks using some mathematical tools such as game theory in order to exploit some of the hidden properties of such networks. Such studies not only improve our understanding of the complex nature of social and distributed events, but also enable us to devise more efficient algorithms toward some desired outcomes in such networks. 

Due to accelerated growth of economic networks and advances in using game-theoretic tools in various applications, modeling of distributed network storage has become an important issue. In general, distributed network storage games or resource allocation games are characterized by a set of agents who compete for the same set of resources \cite{pacifici2012convergence,masucci2014strategic}, and arise in a wide variety of contexts such as congestion games \cite{milchtaich1996congestion,ackermann2008impact,fabrikant2004complexity}, load balancing \cite{ghosh1994dynamic}, peer-to-peer systems \cite{pollatos2008social}, web-caches \cite{gopalakrishnan2012cache}, content management \cite{pollatos2008social}, and market sharing games \cite{goemans2006market}. Among many problems that arise in such a context, one that stands out is distributed replication, which not only improves the availability of resources for users, but also increases the reliability of the entire network with respect to customer requests \cite{chun2004selfish}, \cite{goyal2000learning}. However, one of the main challenges in modeling resource allocation problems using game-theoretic tools is to answer the question of to what extent such models can predict the desired optimal allocations over a given network. Modeling a system as a game, ideally one would like for the set of equilibria or person-by-person optimal solutions of the game to be as close as possible to the desired states of the system. In fact, the \textit{price of anarchy (PoA)} \cite{koutsoupias1999worst} is one of the metrics in game theory that measures efficiency and the extent to which a system degrades due to selfish behavior of its agents; it has been used extensively in the literature \cite{pollatos2008social,goemans2006market,vetta2002nash,chun2004selfish}.          

Distributed replication games with servers that have access to all the resources and are accessible at some cost by users have been studied in \cite{laoutaris2006distributed}. Moreover, the uncapacitated selfish replication game where the agents have access to the set of all resources was studied in \cite{chun2004selfish}, where the authors were able to characterize the set of equilibrium points based on the parameters of the problem. However, unlike the uncapacitated case, there is no comparable characterization of equilibrium points in capacitated selfish replication games. In fact, when the agents have limited capacity, the situation could be much more complicated as the constraint couples the actions of agents much more than in the uncapacitated case or in replication games with servers. 

Typically, capacitated selfish replication games are defined in terms of a set of available resources for each player, where the players are allowed to communicate through an undirected communication graph. Such a communication graph identifies the access cost among the players, and the goal for each player is to satisfy his/her customers' needs with minimum cost. Ideally, and in order to avoid any additional cost, each player only wants to use his/her own set of resources. However, due to limitation on capacity, players do not have access to all the resources and hence, they incur some cost by traveling over the network and borrowing some of the resources which are not available in their own caches from others in order to meet their customers' demands. The problem of finding an equilibrium for capacitated selfish replication games in the case of hierarchical networks was studied in \cite{gopalakrishnan2012cache}. Moreover, the class of capacitated selfish replication games with binary preferences has been studied in \cite{gopalakrishnan2012cache,etesami2014pure}, where ``binary preferences" captures the behavioral pattern where players are
equally interested in some objects. 

In this paper, we consider the capacitated selfish replication game with binary preferences. In this model players act myopically and selfishly, while they are required to fully satisfy their customers' needs. Note that, although the players act in a selfish manner with respect to others in satisfying their customer needs, their actions are closely coupled with the others' and they do not have absolute freedom in the selection of their actions. It was shown in \cite{gopalakrishnan2012cache} that when the number of resources is 2, there exists a polynomial time algorithm $\mathcal{O}(n^3)$ to find an equilibrium. This result has been improved in \cite{etesami2014pure} to a linear time algorithm when the number of resources is bounded above by 5. However, in general there exist only exponential time algorithms for finding a pure-strategy Nash equilibrium. In this work we consider such games over general undirected networks and devise a quasi-polynomial algorithm which drives the system to an allocation profile whose total cost lies within a constant factor of that in any pure-strategy Nash equilibrium. In particular, we show that the price of anarchy, i.e., the ratio of the highest cost among Nash equilibrium points to the overall minimum cost of an optimal allocation, is in general bounded above by 3.    
 
The paper is organized as follows. In Section~\ref{sec:game-model}, we introduce capacitated selfish replication games with binary preferences over general undirected networks. We review some salient properties of such games and include some relevant existing results on this problem. In Section \ref{sec:Price-of-Anarchy}, we provide an upper bound on the price of anarchy of such games and also an upper bound on the optimal allocation cost. In Section \ref{sec:apx-alg} we devise a quasi-polynomial algorithm which can reach an allocation profile within a constant factor of the optimal allocation profile, and hence of any pure-strategy
Nash equilibrium. We conclude the paper with identifying future directions of research in Section~\ref{sec:conclusion}. 

\textbf{Notations}: 
For a positive integer $n$, we let $[n]:=\{1,2,\ldots,n\}$. For a vector $v\in \R^n$, we let $v_i$ be the $i$th entry of $v$. We use $\mathcal{G}=([n], \mathcal{E})$ for an undirected underlying network with a node set $\{1,2,\ldots,n\}$ and an edge set $\mathcal{E}$. For any two nodes $i, j \in [n]$, we let $d_{\mathcal{G}}(i,j)$ be the graphical distance between them, that is, the length of a shortest path which connects $i$ and $j$. The diameter of a graph, denoted by $D$, is the maximum distance between any pair of vertices, that is, $D=\max_{i,j \in [n]} d_{\mathcal{G}}(i,j)$. We let $d_{\min}$ and $d_{\max}$ denote the minimum and the maximum degree in the graph $\mathcal{G}$, respectively. Moreover, for an arbitrary node $i\in [n]$ and an integer $r\ge 0$, we define a ball of radius $r$ and center $i$ to be the set of all the nodes in the graph $\mathcal{G}$ whose graphical distance to the node $i$ is at most $r$, i.e., $B(i,r)=\{x\in \mathcal{V}| d_{\mathcal{G}}(i,x)\leq r\}$. We denote a specific Nash equilibrium and a specific optimal allocation by $P^*$ and $P^o$, respectively. Finally, we use $|S|$ to denote the cardinality of a finite set $S$.
 
\section{Problem Formulation and Existing Results}\label{sec:game-model}

In this section we first introduce the capacitated selfish replication game with binary preferences as was introduced in \cite{gopalakrishnan2012cache}. In the balance of this paper, our focus will be on such games, which for simplicity we refer to as CSR games.

\subsection{CSR Game Model}
We start with a set of $[n]=\{1,2,\ldots,n\}$ nodes (players) which are connected by an undirected graph $\mathcal{G}=([n], \mathcal{E})$. We denote the set of all resources by $O=\{o_1, o_2,\ldots, o_k\}$. For simplicity, but without much loss of generality, we assume that each node can hold only one resource in its cache. All the results can in fact be extended to CSR games with different capacities (see Remark \ref{rem:varying-cache} below). Moreover, we assume that each node has access to all the resources. For a particular allocation $P=(P_1, P_2, \ldots, P_n)$, we define the sum cost function $C_i(P)$ of the $i$th player as follows:
\begin{align}\label{eq:CSR-cost-formulation}
C_i(P)=\sum_{o\in O\setminus \{P_i\}}d_{\mathcal{G}}(i, \sigma_i(P,o)), 
\end{align}
where $\sigma_i(P,o)$ is $i$'s nearest node holding $o$ in $P$. Given an allocation profile $P$ we define the {\bf\textit{radius}} of agent $i$, denoted by $r_i(P)$, to be the distance between node $i$ and the nearest node other than her holding the same resource as $i$, i.e., $r_i(P)=\min_{ j\neq i, P_j=P_i}d_{\mathcal{G}}(i,j)$. Note that if there does not exist such a node, we simply define $r_i(P)=D$, where $D$ is the diameter of the network. We suppress the dependence of $r_i(P)$ on $P$ whenever there is no ambiguity. Finally, If some resource $o$ is missing in an allocation profile $P$, we define the cost of each player for that specific resource to be very large such as $D+1$.

\begin{remark}\label{rem:varying-cache}
Actually all the proofs in this paper can be carried over to games with varying capacities by constructing a new network which transfers games with different cache sizes to one with unit size caches \cite{etesami2014pure,gopalakrishnan2012cache}.
\end{remark}

\begin{remark}\label{rem:cost-to-radius}
Given two allocation profiles $P$ and $\tilde{P}$ which only differ in the $i$th coordinate, using \eqref{eq:CSR-cost-formulation} and the definition of the radius, one can easily see that $C_i(P)-C_i(\tilde{P})=r_i(\tilde{P})-r_i(P)$. This establishes an equivalence between decrease in cost and increase in radius for player $i$, when the actions of the other players are fixed. 
\end{remark}

It has been shown in \cite{gopalakrishnan2012cache} that the CSR game has an associated ordinal potential function, and hence, it has at least one pure Nash equilibrium. However, the number of best responses to reach an equilibrium can in general be exponentially large ($\mathcal{O}(n^D)$). Therefore, the main challenge here is to find an efficient way to arrive at an equilibrium or at least to be close enough to an equilibrium.

\subsection{Least Best Response Algorithm}\label{sec:LBR-dynamics}

It was shown earlier in \cite{etesami2014pure} that the following algorithm known as the \textit{least best response algorithm} will find a pure Nash equilibrium of the CSR game when the number of resources is low or the underlying network is ``dense enough" with respect to the number of resources.  

{\bf \textit{Least Best Response Algorithm 1:}} Given a CSR game, at each time $t=1,2,\ldots$, and from all the agents who want to update, we select an agent with the least radius and let her update be based on her best response. Ties are broken arbitrarily.  

\begin{theorem}\label{thm:dense-graph}(\cite{etesami2014pure})
In the binary CSR game with $n$ agents and network diameter $D$, let $T$ denote the convergence time of the least best response algorithm to a pure Nash equilibrium. Then,
\begin{align}\nonumber
T\leq \begin{cases} 3n^3\min\{|O|-1,D\}, & \mbox{if } \ \ |O|\leq d_{\min} \\ 
n\min\{D, |O|-1\}, & \mbox{if} \ \ \ |O|< 5, \end{cases}
\end{align} 
\end{theorem}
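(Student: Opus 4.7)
The approach is to control the dynamics through an auxiliary quantity --- the network-wide minimum radius $R^*(t) := \min_{i \in [n]} r_i(P(t))$ --- and to show that under the least-radius update rule this quantity is non-decreasing along any run of the algorithm. First I would observe that the player $i$ selected at time $t$ satisfies $r_i = R^*(t)$ by construction; since she performs a best response, her own radius strictly increases (equivalently, her cost strictly decreases, by Remark~\ref{rem:cost-to-radius}). The remainder of the argument consists in ruling out a collateral drop of some other player's radius to a value below $R^*(t)$, and in counting how many updates can occur at each fixed level of $R^*$.

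The monotonicity invariant ``$R^*$ is non-decreasing'' is the main technical obstacle. A collateral decrease $r_j(P(t+1)) < R^*(t)$ could only arise if $i$'s move vacated the unique nearest duplicate of some resource that $j$ was relying on, since bringing a new copy closer cannot decrease a radius. In the dense regime $|O| \leq d_{\min}$, the size of $i$'s immediate neighborhood guarantees enough local resource redundancy that either $i$'s old resource still appears at distance at most one from the would-be victim $j$, or $j$'s pre-update radius was exactly $R^*(t)$ (and the update leaves it unchanged). I would formalize this via a case split on (i) whether the affected $j$ is adjacent to $i$ and (ii) whether her pre-update radius equals $R^*(t)$ or exceeds it. In the few-resources regime $|O| < 5$, only a handful of local resource configurations are possible and the same conclusion follows by direct inspection.

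With monotonicity in hand, the counting is largely bookkeeping. Admissible radii lie in $\{0, 1, \ldots, \min\{D, |O|-1\}\}$: the cap $|O|-1$ is because any shortest path out of $i$ can traverse at most $|O|-1$ nodes carrying pairwise-distinct resources different from $P_i$ before it must hit another copy of $P_i$. Consequently $R^*$ strictly increases at most $\min\{D, |O|-1\}$ times. Between two strict increases of $R^*$, I would use a secondary potential --- for instance the lexicographically sorted vector of radii at the current floor together with the total number of players at that floor --- which strictly improves on every update because the chosen player's radius jumps above the floor while (by the invariant) nobody drops to a lower value. A worst-case cascade analysis bounds this secondary potential's progress by $O(n^3)$ steps per level, which combined with the at most $\min\{|O|-1, D\}$ levels produces $3 n^3 \min\{|O|-1, D\}$.

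For the sharper bound when $|O| < 5$, I would discard the cubic per-level estimate in favor of a direct per-player accounting. The monotonicity of $R^*$, together with the fact that only $|O|-1 \leq 4$ distinct non-saturating radius values are possible, forces each individual player's sequence of radii along her own updates to be strictly increasing and capped by $\min\{D, |O|-1\}$; summing over the $n$ players gives $n \min\{D, |O|-1\}$. The delicate step here is to show that a single player cannot update twice at the same radius value, which I expect to follow from a pigeonhole argument on local configurations: with at most four distinct resources in play, once the global floor $R^*$ advances past a player's last-update radius, too few patterns remain in her neighborhood to ever put her back at that radius as the strict minimizer. Establishing the monotonicity of $R^*$ under the density hypothesis, rather than the counting that follows it, is the true workload of the proof.
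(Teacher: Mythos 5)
First, note that this paper does not prove Theorem~\ref{thm:dense-graph} at all: it is imported verbatim from \cite{etesami2014pure}, so there is no in-paper proof to compare against, and your sketch has to stand on its own. It does not. Two of its load-bearing claims are wrong as stated. (i) The selected agent is the one with least radius \emph{among the agents who want to update}, so ``$r_i=R^*(t)$ by construction'' is false: a contented agent (already playing a best response) may hold a strictly smaller radius than the mover. (ii) Your mechanism for collateral radius decreases is inverted. Since $r_j(P)=\min_{k\neq j,\,P_k=P_j}d_{\mathcal{G}}(j,k)$, player $i$ \emph{vacating} $o_1$ can only weakly increase the radii of the remaining $o_1$-holders, whereas $i$ \emph{adopting} $o_2$ can strictly decrease the radius of a nearby $o_2$-holder $j$, down to $d_{\mathcal{G}}(i,j)$. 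The case split you call the ``true workload'' is therefore organized around the wrong event. Ironically, the invariant you are laboring to establish is easy and needs neither the density hypothesis nor the least-radius selection rule: any strictly improving move satisfies $r_i(t+1)>r_i(t)\ge R^*(t)$, and any collateral decrease is bounded below by $d_{\mathcal{G}}(i,j)\ge r_i(t+1)$ (this is exactly the three-case analysis used in the proof of Lemma~\ref{lemm:Alg-lemma}), so the global minimum radius is monotone under \emph{any} improving dynamics. Consequently monotonicity of $R^*$ cannot be where the hypotheses $|O|\le d_{\min}$ or $|O|<5$ enter, and your argument never uses them in a checkable way.

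The actual content of the theorem is the counting, and there the proposal only asserts what has to be proved. The ``worst-case cascade analysis'' giving $\mathcal{O}(n^3)$ updates per radius level is not argued at all --- neither the cubic dependence nor the constant $3$ is accounted for, and nothing ties it to $|O|\le d_{\min}$. For the $|O|<5$ bound, the crux is precisely your ``delicate step'': that a player's radius is strictly larger at each of her successive updates (equivalently, each player updates at most $\min\{D,|O|-1\}$ times). This is nontrivial because other players adopting her resource can push her radius back down between her updates, and it is exactly the fact that \cite{etesami2014pure} establishes by a detailed case analysis on the small number of resources; an appeal to ``pigeonhole on local configurations'' is a placeholder, not a proof. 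As it stands, the proposal establishes (after correcting the mechanism) only that $\min_i r_i$ is nondecreasing, which by itself gives nothing close to the stated bounds; the quantitative claims $3n^3\min\{|O|-1,D\}$ and $n\min\{D,|O|-1\}$ are not reached.
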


In the remainder of this paper, we provide substantial improvements to these results. We provide a constant approximation algorithm which works in quasi-polynomial time over general networks and without any extra assumption on the structure of the network. Roughly speaking, this algorithm reduces the time complexity of being close to an equilibrium from the naive search of $\mathcal{O}(n^{D})$ to $\mathcal{O}(n^{\ln D})$.

\section{Price of Anarchy for the CSR Game}\label{sec:Price-of-Anarchy}

In this section, we show that all the equilibrium points of the CSR game are almost as good as the global optimal allocation, i.e., an allocation profile which minimizes the total sum of the costs of the players.

\begin{theorem}\label{thm:Price-of-Anarchy}
The price of anarchy in the CSR game is bounded from above by 3. 
\end{theorem}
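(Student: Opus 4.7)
Let $P^*$ denote a Nash equilibrium and $P^o$ an optimal allocation; write $a_i := P^*_i$, $b_i := P^o_i$, and $r_i := r_i(P^*)$. The plan is to bound $C_i(P^*)$ for each player by rerouting its nearest-holder distances in $P^*$ through the corresponding nearest holders in $P^o$, and then to show the accumulated radius corrections are themselves at most $2\, C(P^o)$.

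The starting point is Remark~\ref{rem:cost-to-radius}: since $P^*$ is a NE, any unilateral deviation $a_i \to o$ cannot decrease $i$'s cost, so the post-deviation radius cannot be smaller than $r_i$. Equivalently, for every $o \neq a_i$ we have $d_{\mathcal{G}}(i, \sigma_i(P^*, o)) \leq r_i$. For each pair $(i, o)$ with $o \neq a_i$, I would route through $j := \sigma_i(P^o, o)$ via the triangle inequality:
\begin{equation*}
d_{\mathcal{G}}(i, \sigma_i(P^*, o)) \leq d_{\mathcal{G}}(i, j) + d_{\mathcal{G}}(j, \sigma_j(P^*, o)).
\end{equation*}
If $a_j = o$ the second summand vanishes; otherwise, applying the NE condition at $j$ bounds it by $r_j$. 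Summing over $i$ and $o \neq a_i$, the first summands accumulate to at most $C(P^o)$ (for each $i$, the sum $\sum_{o \neq a_i} d_{\mathcal{G}}(i, \sigma_i(P^o, o))$ matches $C_i(P^o)$ up to the non-negative discrepancy of swapping the $o = a_i$ term for the vanishing $o = b_i$ term $\sigma_i(P^o, b_i) = i$), yielding
\begin{equation*}
C(P^*) \leq C(P^o) + R, \qquad R := \sum_{j:\, a_j \neq b_j} r_j\, M_j,
\end{equation*}
where $M_j$ is the number of players $i$ with $\sigma_i(P^o, b_j) = j$ and $a_i \neq b_j$.

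To conclude $\mathrm{PoA} \leq 3$ it remains to establish $R \leq 2\, C(P^o)$. For each reassigned player $j$ I would split $r_j M_j = r_j + r_j (M_j - 1)$: the lone $r_j$ is charged to $C_j(P^o)$ via the term $d_{\mathcal{G}}(j, \sigma_j(P^o, a_j))$, using the NE at $j$ combined with a triangle inequality through $\sigma_j(P^o, a_j)$; each of the remaining $M_j - 1$ copies is charged against $d_{\mathcal{G}}(i, j) = d_{\mathcal{G}}(i, \sigma_i(P^o, b_j))$ for a player $i \neq j$ in $j$'s Voronoi cell, using a bound of the form $r_j \leq 2\, d_{\mathcal{G}}(i, j)$ obtained from the NE at $i$ applied to a deviation towards $a_j$ and triangle inequality through the nearest $a_j$-holder in $P^*$ from $i$ (which is not $j$ itself in the generic case). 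The main technical obstacle is handling the cascading radii that appear when the nearest $a_j$-holder in $P^o$ is itself a reassigned player, and when the chosen intermediate $a_j$-holder in $P^*$ from $i$ coincides with $j$; both situations require either an inductive argument that closes the chain with only constant accumulation, or a careful amortized accounting that avoids double-charging the same $P^o$-distance. Once this is settled, $C(P^*) \leq C(P^o) + 2\, C(P^o) = 3\, C(P^o)$, which yields $\mathrm{PoA} \leq 3$.
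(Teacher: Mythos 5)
Your first half is correct and is a genuinely different route from the paper's: the NE condition (via Remark~\ref{rem:cost-to-radius}) indeed gives $d_{\mathcal{G}}(i,\sigma_i(P^*,o))\le r_i$ for all $o\neq a_i$, and the rerouting through $\sigma_i(P^o,o)$ legitimately yields $C(P^*)\le C(P^o)+R$ with $R=\sum_{j:a_j\neq b_j} r_j M_j$. But the proof is incomplete exactly at the step that carries all the weight: $R\le 2\,C(P^o)$ is asserted, not proven, and the sketched charging has concrete problems. First, the bound $r_j\le 2\,d_{\mathcal{G}}(i,j)$ does not follow from ``NE at $i$ plus triangle inequality'': NE at $i$ gives $d_{\mathcal{G}}(i,\sigma_i(P^*,a_j))\le r_i$, so routing through $\sigma_i(P^*,a_j)$ only yields $r_j\le d_{\mathcal{G}}(i,j)+r_i$, and nothing forces $r_i\le d_{\mathcal{G}}(i,j)$ (the resource $a_i$ may be globally rare in $P^*$, making $r_i$ huge while $i$ and $j$ are adjacent); in the excluded case $\sigma_i(P^*,a_j)=j$ you learn only $d_{\mathcal{G}}(i,j)\le r_i$, the opposite direction. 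Second, the ``lone $r_j$'' charge hits the cascading problem you yourself flag: when $k=\sigma_j(P^o,a_j)$ is itself reassigned you only get $r_j\le d_{\mathcal{G}}(j,k)+r_k$, the chain can be long, and the same $P^o$-distance can be charged from many different starting nodes, so the deferred ``inductive or amortized accounting'' is precisely the missing proof, not a technicality. As it stands you have established $C(P^*)\le C(P^o)+R$ with no control on $R$.

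For comparison, the paper avoids any global charging by a purely local, per-player argument: for each $i$ let $\hat r_i$ be the smallest radius at which $B(i,\hat r_i)$ contains two copies of some resource. At equilibrium the duplicated pair supplies a node $j_0\in B(i,\hat r_i)$ whose NE radius is at most $2\hat r_i$, so every resource appears in $B(j_0,2\hat r_i)\subseteq B(i,3\hat r_i)$; hence $C_i(P^*)\le \sum_{j\in B(i,\hat r_i-1)}d_{\mathcal{G}}(i,j)+3\hat r_i\bigl(|O|-1-|B(i,\hat r_i-1)|\bigr)$, while any allocation must cost $i$ at least $\sum_{j\in B(i,\hat r_i-1)}d_{\mathcal{G}}(i,j)+\hat r_i\bigl(|O|-1-|B(i,\hat r_i-1)|\bigr)$. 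This gives the pointwise inequality $C_i(P^*)\le 3\,C_i(P^o)$ for every player, which sums immediately and sidesteps all double-charging issues. If you want to salvage your approach, aim for such a per-player bound rather than an aggregate amortization of $R$.
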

\begin{proof}
For an arbitrary equilibrium $P^*$ and a specific node $i$ with equilibrium radius $r^*_i$, i.e., $r^*_i=d_{\mathcal{G}}(i,\sigma_i(P^*,P^*_i))$, we note that all the resources must appear at least once in $B(i,r^*_i)$. In fact, if a specific resource is missing in $B(i,r^*_i)$, then node $i$ can increase its radius by updating its current resource to that specific resource, thereby decreasing its cost. But this is in contradiction with $P^*$ being an equilibrium (Figure \ref{fig:them-POA}). Now, given the equilibrium profile $P^*$, let us define $\hat{r}_{i}$ to be the smallest integer such that $B(i,\hat{r}_{i})$ contains at least two resources of the same type, i.e.,

\begin{align}\nonumber
\hat{r}_{i}=\min\Big\{r\in \mathbb{N}: &\ \forall j,k\in B(i,r\!-\!1), P^*_j\neq P^*_k, \mbox{and} \ \exists j_0,k_0\in B(i,r), P^*_{j_0}=P^*_{k_0} \Big\}.
\end{align} 

\begin{figure}[htb]
\vspace{-1cm}
\begin{center}
\includegraphics[totalheight=.25\textheight,
width=.35\textwidth,viewport=-100 0 1100 1200]{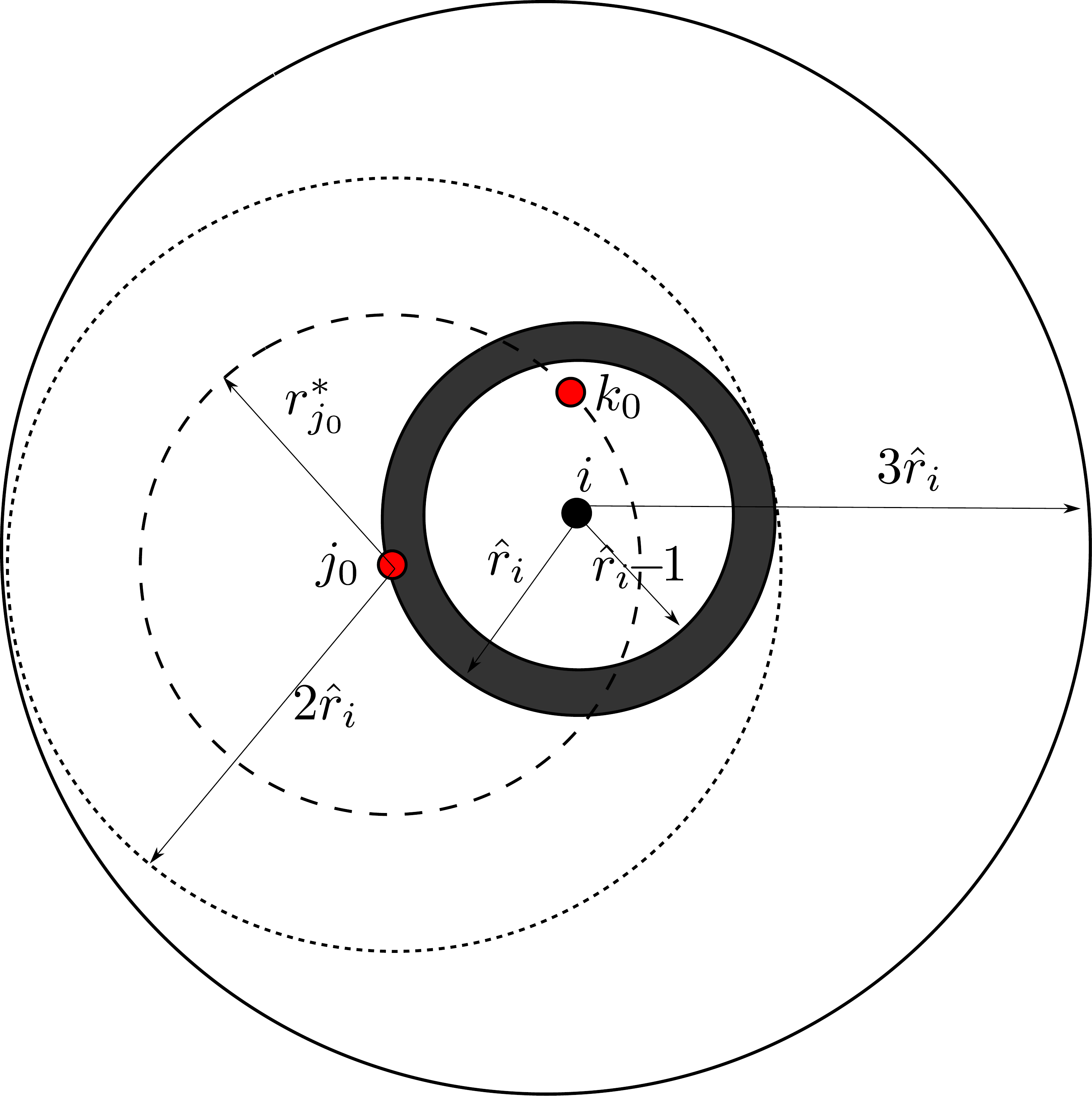} 
\end{center}
\vspace{-0.4cm}\caption{Illustration of resource allocation in Nash equilibrium $P^*$. Note that $r^*_{j_0}$ denotes the radius of node $j_0$ at equilibrium $P^*$ which is upper bounded by $d_{\mathcal{G}}(j_0,k_0)$. The ball $B(i,\hat{r}_{i}-1)$ contains only resources of different types.}
\label{fig:them-POA}
\end{figure}

Now we claim that all the resources must appear at least once in $B(i,3\hat{r}_{i})$. To see this and by the above definition, let $j_0\neq k_0\in B(i,\hat{r}_{i})$ be such that $P^*_{j_0}=P^*_{k_0}$. This means that the equilibrium radius of node $j_0$, i.e., $r^*_{j_0}$ in $P^*$ is at most $d_{\mathcal{G}}(j_0,i)+d_{\mathcal{G}}(i,k_0)\leq 2\hat{r}_{i}$. On the other hand, by the argument at the beginning of the proof, all the resources must appear at least once in $B(j_0,2\hat{r}_{i})$. But since $B(j_0,2\hat{r}_{i})\subseteq B(i,3\hat{r}_{i})$, this shows that $B(i,3\hat{r}_{i})$ must include all the resources at least once. 

Next, let us denote an optimal allocation profile by $P^o$, and the cost of node $i$ in the optimal allocation and Nash equilibrium by $C_i(P^o)$ and $C_i(P^*)$, respectively. Now for the equilibrium $P^*$ and since by the definition of $\hat{r}_{i}$ there are no two similar resources in $B(i,\hat{r}_{i}\!-\!1)$, and all the resources appear at least once in $B(i,3\hat{r}_{i})$, we can write 
\begin{align}\label{eq:nash-anarchy}
C_i(P^*)\leq \!\!\!\!\!\!\sum_{j\in B(i,\hat{r}_{i}\!-\!1)}\!\!\!\!\!\!\!\!d_{\mathcal{G}}(i,j)\!+\!3\hat{r}_{i}\!\left(|O|\!-\!1\!-\!|B(i,\hat{r}_{i}\!-\!1)|\right)\!.
\end{align}
On the other hand, for the cost of node $i$ in the optimal allocation $P^o$ we can write,
\begin{align}\label{eq:optimal-anarchy}
C_i(P^o)\ge \!\!\!\!\!\!\sum_{j\in B(i,\hat{r}_{i}\!-\!1)}\!\!\!\!\!\!\!\!d_{\mathcal{G}}(i,j)\!+\!\hat{r}_{i}\!\left(|O|\!-\!1\!-\!|B(i,\hat{r}_{i}\!-\!1)|\right)\!,
\end{align} 
where the inequality holds since node $i$ has to pay at least $\sum_{j\in B(i,\hat{r}_{i}\!-\!1)}d_{\mathcal{G}}(i,j)$ for the first $|B(i,\hat{r}_{i}\!-\!1)|$ closest resources, and to pay at least $\hat{r}_{i}$ for the remaining $(|O|-1-|B(i,\hat{r}_{i}\!-\!1)|)$ resources. By comparing relations \eqref{eq:nash-anarchy} and \eqref{eq:optimal-anarchy}, it is not hard to see that $\frac{C_i(P^*)}{C_i(P^o)}$ is at most
\begin{align}\nonumber
\frac{\sum_{j\in B(i,\hat{r}_{i}\!-\!1)}d_{\mathcal{G}}(i,j)+3\hat{r}_{i}\left(|O|-1-|B(i,\hat{r}_{i}\!-\!1)|\right)}{\sum_{j\in B(i,\hat{r}_{i}\!-\!1)}d_{\mathcal{G}}(i,j)+\hat{r}_{i}\left(|O|-1-|B(i,\hat{r}_{i}\!-\!1)|\right)},  
\end{align} 
which is bounded from above by 3. Since node $i$ and the equilibrium $P^*$ were chosen arbitrarily, for every equilibrium $P^*$ and for all $i\in V(\mathcal{G})$, we have $C_i(P^*)\leq 3C_i(P^o)$. Summing this inequality over all $i\in V(\mathcal{G})$ we get $ C(P^*)=\sum_i C_i(P^*)\leq 3\sum_i C_i(P^o)=3C(P^o)$. Since we have this inequality for all possible Nash equilibria, and using the definition of the price of anarchy, we have $PoA=\frac{\max_{P^*\in NE} C(P^*)}{C(P^o)}\leq 3$. 
\end{proof}

Next in the following we provide an example which shows that for some network configurations, the price of anarchy of the CSR game can actually be arbitrarily close to 2.
\begin{example}\label{ex:POA}
Given arbitrary positive integers $m,|O|\ge 2$, let us consider a network of $n:=(m+1)(|O|-1)$ nodes (players) as shown in Figure \ref{fig:example-POA}. Here we assume that $|O|$ is the number of resources $\{o_1,o_1,\ldots,o_{|O|}\}$ in the CSR game over this network. The bottom part of the network is composed of a clique of $|O|-1$ nodes, and the top part forms an independent set of $m(|O|-1)$ nodes which all are connected to the bottom part. As it can be seen in Figure \ref{fig:example-POA}, the top figure constitutes a pure Nash equilibrium for the CSR game with the total cost of $C(P^*)=m(|O|-1)(2|O|-3)+(|O|-1)^2$. Moreover, it can easily be seen that the bottom figure illustrates an optimal allocation, where the cost of each node is $|O|-1$, and hence the total optimal cost equals $C(P^{o})=(m+1)(|O|-1)^2$. Thus, we have $\frac{C(P^*)}{C(P^{o})}=2-\left(\frac{m+|O|-1}{(m+1)(|O|-1)}\right)$. This shows that for large $m$ and $|O|$, the price of anarchy of the CSR game over such networks can be arbitrarily close to 2.  
\end{example}

\begin{figure}[htb]
\vspace{2.3cm}
\begin{center}
\includegraphics[totalheight=.15\textheight,
width=.25\textwidth,viewport=100 0 700 600]{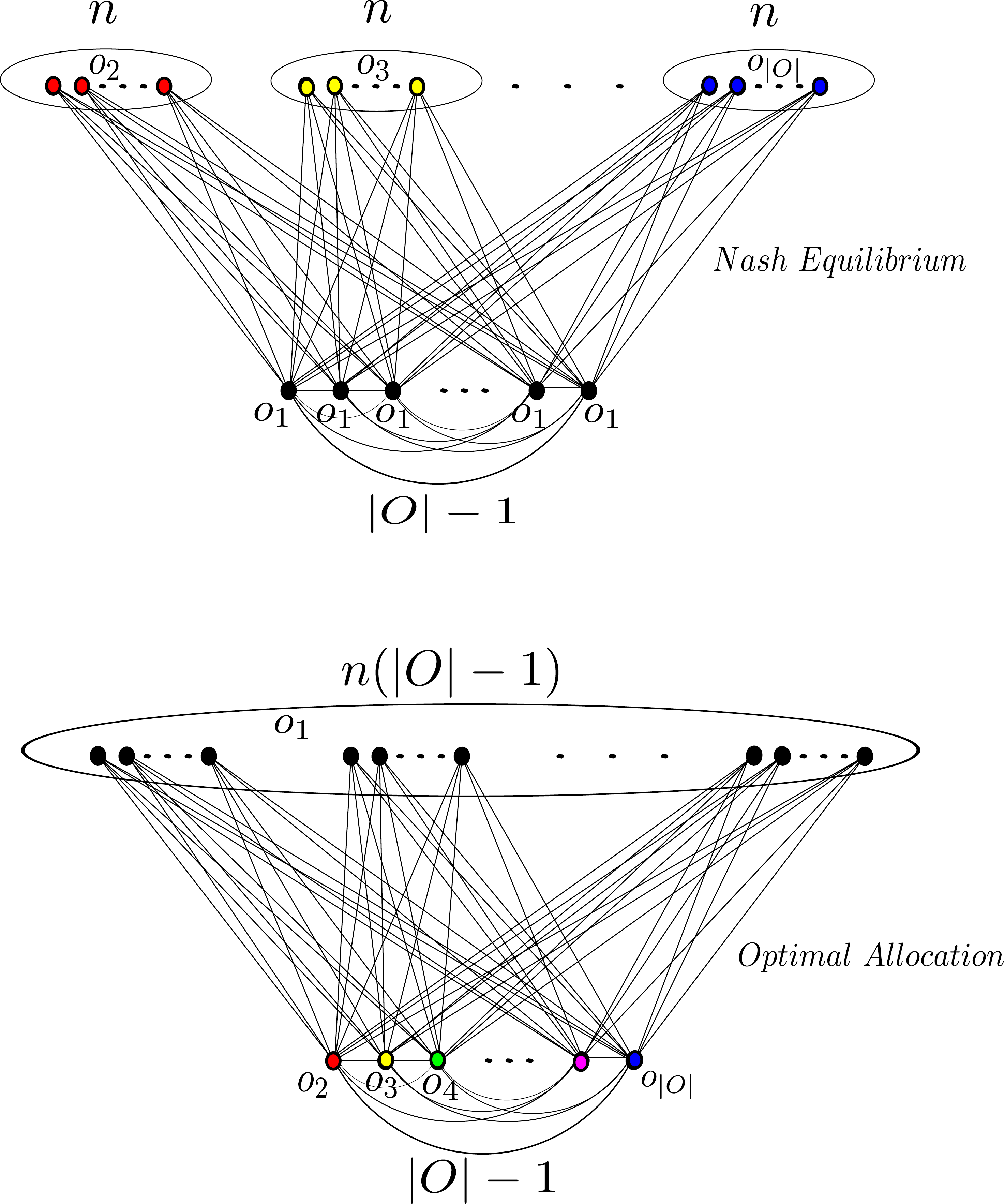} \hspace{0.4in}
\end{center}
\vspace{-0.25cm}\caption{Illustration of the resource allocation for a Nash equilibrium (top figure), and an optimal allocation (bottom figure) in Example \ref{ex:POA}.}
\label{fig:example-POA}
\end{figure}

Before concluding, we provide in the theorem below an upper bound on the minimum social cost in the general CSR game, which uses a probabilistic argument.  
 
\begin{theorem}\label{thm:upper-bound-randomization-optimal}
For a network $\mathcal{G}$ with vertex degrees $d_i, i\in [n]$, the optimal allocation cost for the CSR game is bounded above by $n(2|O|-1)+|O|(|O|-1)\sum_{i=1}^{n}\left(1-\frac{1}{|O|}\right)^{d_i}$. 
\end{theorem}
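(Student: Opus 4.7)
My plan is to use the probabilistic method. Let $\tilde{P}$ denote the random allocation in which each player $i \in [n]$ chooses $\tilde{P}_i \in O$ independently and uniformly at random; since $C(P^o) = \min_P C(P) \le \EXP{C(\tilde{P})}$, it suffices to upper-bound the expected total cost of $\tilde{P}$.

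First I would reduce the expected per-node cost to the expected distance to a single fixed reference resource. Fixing $i \in [n]$ and an arbitrary $o_0 \in O$, let $Z_i$ be the graph distance from $i$ to the nearest \emph{other} node that holds $o_0$ under $\tilde{P}$, capped at $D+1$ if no other node holds $o_0$. Because $\tilde{P}_i$ is independent of $\{\tilde{P}_j\}_{j \neq i}$ and the $|O|$ resources play symmetric roles, a short calculation yields $\EXP{C_i(\tilde{P})} = (|O|-1)\,\EXP{Z_i}$. Using the tail-sum identity $\EXP{Z_i} = \sum_{r=0}^{D}\prob{Z_i > r}$ together with the i.i.d.\ formula $\prob{Z_i > r} = (1-1/|O|)^{|B(i,r)|-1}$, and the two elementary ball-growth facts that $|B(i,r)| \ge 1 + d_i$ for every $r \ge 1$ (from the one-neighborhood of $i$) and that $|B(i,r)|$ increases by at least one new vertex per step until it saturates at $n$, I would then sum the resulting geometric-type tail to arrive at a per-node estimate of the form $\EXP{C_i(\tilde{P})} \le (2|O|-1) + |O|(|O|-1)(1-1/|O|)^{d_i}$. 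Summation over $i \in [n]$ then produces the stated inequality.

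The main obstacle will be the graph-independent control of the ``saturated'' portion of the tail, where $B(i,r) = V(\mathcal{G})$ but $r < D$. A crude truncation of $\prob{Z_i > r}$ by $1$ on such terms would introduce an unwanted factor of the diameter $D$ in the coefficient of $(1-1/|O|)^{d_i}$; obtaining the clean constant $|O|$ instead requires combining the un-saturated and saturated contributions in a single estimate, exploiting the monotonicity of $|B(i,r)|$ together with the elementary inequality $k(1-1/|O|)^{k} \le |O|$ valid for every integer $k \ge 0$, which is what allows one to absorb the overshoot into the additive constant $2|O|-1$.
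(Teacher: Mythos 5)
Your plan is essentially the paper's own proof: a uniform independent random assignment, the tail-sum formula for the expected distance to the nearest copy of a fixed resource, the ball-growth bound $|B(i,r)|\ge d_i+r$ for $r\ge 1$, and a geometric-series estimate. Your per-node reduction $\EXP{C_i(\tilde P)}=(|O|-1)\EXP{Z_i}$ is in fact slightly tighter than the paper's computation, which sums over all $|O|$ resources and starts the tail sum at $r=0$.

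The one place where you go beyond the paper is the ``saturated'' part of the tail, and there the patch you describe does not close the gap as stated. After your reduction, the saturated radii contribute $(|O|-1)(D-e_i)\left(1-\frac{1}{|O|}\right)^{n-1}$ to node $i$'s expected cost, where $e_i$ is the first radius with $B(i,e_i)=V(\mathcal{G})$. The additive room you have left for absorbing this is only $(2|O|-1)-(|O|-1)=|O|$ per node, since the $r=0$ term already consumes $|O|-1$; but the inequality $k\left(1-\frac{1}{|O|}\right)^{k}\le|O|$ combined with the crude count $D-e_i\le n-1$ bounds the saturated contribution only by $(|O|-1)\,|O|$, which is a factor of order $|O|$ too large. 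A rigorous completion has to exploit that when saturation occurs well before radius $D$ the ball must have grown by more than one vertex per step, so the unsaturated exponents have gaps and the bound $\sum_{r\le e_i}\left(1-\frac{1}{|O|}\right)^{|B(i,r)|-1}\le |O|\left(1-\frac{1}{|O|}\right)^{d_i}$ is far from tight; equivalently, one can work throughout with the relaxed exponents $\min\{n-1,\,d_i+r-1\}$, for which the number of saturated terms is at most $(D-n+d_i)^{+}$, and then use a combinatorial relation between $n$, $D$ and $d_i$. To be fair, this is also the unacknowledged soft spot of the paper's own argument: its displayed chain implicitly treats the exponents $|B(i,r-1)|$, $r=2,\dots,D+1$, as distinct integers at least $d_i+1$, which is not justified (and can fail for individual nodes) once the ball covers the whole graph before radius $D$. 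So you correctly identified the real obstacle, but the specific inequality you cite does not by itself resolve it.
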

\begin{proof}
Let us assign with probability $\frac{1}{|O|}$ and independently a resource to any of the vertices of $\mathcal{G}$. Now for an arbitrary but fixed node $i$, $d_{\mathcal{G}}(i,\sigma_i(o,P))$ is a random variable denoting the graphical distance between player $i$ and the player closest to him who has resource $o$ under such a random resource allocation $P$. For any nonnegative integer $r$ and any arbitrary but fixed resource $o\in O$, we have

\begin{align}\nonumber
\mathbb{P}\big\{d_{\mathcal{G}}(i,\sigma_i(o,P))\ge r\big\} = \begin{cases}\!\left(1\!-\!\frac{1}{|O|}\right)^{|B(i,r-1)|} & \!\!\!\mbox{if} \ 0\leq r\leq D\!+\!1 \\ 
0 & \!\!\!\mbox{else}. \end{cases}
\end{align}

Note that for $r=0$ we have $B(i,r-1)=\emptyset$, and hence, $|B(i,r-1)|=0$. 
Therefore, using the tail formula for expectation of discrete random variables, the expected distance of player $i$ from resource $o$ is $\sum_{r=1}^{\infty}\mathbb{P}(d_{\mathcal{G}}(i,\sigma_i(o,P))\ge r)$. Hence, the total expected cost of such random assignment for all players and all resources  is

\begin{align}\nonumber
&\mathbb{E}[C(P)]=\sum_{o\in O}\sum_{i=1}^{n}\sum_{r=0}^{\infty}\mathbb{P}(d_{\mathcal{G}}(i,\sigma_i(o,P))\ge r)\cr
&\qquad=|O|\sum_{i=1}^{n}\sum_{r=0}^{D+1}\left(1-\frac{1}{|O|}\right)^{|B(i,r-1)|}\cr 
&\qquad=|O|\sum_{i=1}^{n} \left(1+\left(1-\frac{1}{|O|}\right)+\sum_{r=2}^{D+1} \left(1-\frac{1}{|O|}\right)^{|B(i,r-1)|}\right)\cr
&\qquad<n(2|O|-1)+|O|\sum_{i=1}^{n} \sum_{k=d_i+1}^{\infty}\left(1-\frac{1}{|O|}\right)^{k}\cr &\qquad=n(2|O|-1)+|O|(|O|-1)\sum_{i=1}^{n}\left(1-\frac{1}{|O|}\right)^{d_i}.
\end{align} 

This shows that there exists at least one resource assignment where the total cost is at most $n(2|O|-1)+|O|(|O|-1)\sum_{i=1}^{n}\left(1-\frac{1}{|O|}\right)^{d_i}$.    
\end{proof}
\begin{remark}
One can easily check that for the line graph, the bound in Theorem \ref{thm:upper-bound-randomization-optimal} is tight up to a constant factor.
\end{remark}

\section{Quasi-Polynomial Approximation Algorithm}\label{sec:apx-alg}

In this section we show that for the CSR game over general networks, one can obtain an allocation profile whose total cost lies within a constant factor of that in an optimal allocation in only quasi-polynomial time. We start with the following lemma. 
      
\begin{lemma}\label{lemm:Alg-lemma}
Given a number $\epsilon>1$, at every time instance there exists an agent $i$ who can increase its radius by a factor of at least $\epsilon$ by playing its best response. Dynamics thus generated terminate after no longer than $\mathcal{O}\big(n^2D^{\log_{\epsilon}n}\big)$ steps.  
\end{lemma}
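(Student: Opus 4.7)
The plan is to compress the known ordinal potential of the CSR game by a geometric bucketing of radii tailored to the $\epsilon$-improvement condition, and then to count the number of resulting potential values. Recall from \cite{gopalakrishnan2012cache} that the vector $(r_i(P))_{i\in[n]}$, arranged in nondecreasing order and compared lexicographically, is an ordinal potential for the CSR game: every best response by a single agent strictly lex-increases this sorted radii vector.

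For a profile $P$, define the bucket index $b_i(P) := \lceil \log_{\epsilon} r_i(P) \rceil$. Since $1 \leq r_i(P) \leq D$, we have $b_i(P) \in \{0, 1, \ldots, L\}$ with $L := \lceil \log_{\epsilon} D \rceil$, and let $\Phi(P)$ denote the multiset $\{b_i(P)\}_{i \in [n]}$ sorted in nondecreasing order, compared lexicographically. The first and main step is to show that any $\epsilon$-improving best response by agent $i$ strictly lex-increases $\Phi$. The mover's own bucket must jump by at least one, since $\lceil \log_{\epsilon}(\epsilon r) \rceil = 1 + \lceil \log_{\epsilon} r \rceil$; combined with the lex-monotonicity of the raw sorted radii vector and the fact that $\lceil \log_{\epsilon}(\cdot) \rceil$ is nondecreasing, I would argue that the sorted bucket vector must strictly lex-increase as well.

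Once monotonicity of $\Phi$ is in hand, counting is routine: the number of multisets of size $n$ drawn from $L+1$ levels is $\binom{n+L}{L} \leq (n+L)^{L} = \mathcal{O}\bigl(n^{\log_{\epsilon} D}\bigr) = \mathcal{O}\bigl(D^{\log_{\epsilon} n}\bigr)$, using the identity $n^{\log_{\epsilon} D} = D^{\log_{\epsilon} n}$. An additional $\mathcal{O}(n^2)$ slack is absorbed by accounting for at most $\mathcal{O}(n)$ best responses per strict advance of $\Phi$, arising from intra-level rearrangements of which agent occupies which bucket. Putting these estimates together yields the claimed termination bound of $\mathcal{O}\bigl(n^2 D^{\log_{\epsilon} n}\bigr)$.

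The hard part will be the monotonicity step. The ordinal potential guarantees strict lex-advance only on the raw radii, and a monotone componentwise post-composition generally preserves only \emph{weak} lex-advance. One must therefore exclude the ``cancellation'' scenario in which another agent's bucket decreases by exactly as much as the mover's increases, leaving the sorted bucket multiset unchanged. Ruling this out requires a case analysis showing that such a matching multiplicative drop in another agent's radius, occurring as a side effect of agent $i$'s best response, is incompatible with the strict cost decrement of $i$ guaranteed by Remark \ref{rem:cost-to-radius}; this is the only point in the argument that demands structural reasoning beyond the ordinal potential abstraction.
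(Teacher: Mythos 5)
Your high-level plan (a geometrically bucketed potential whose strict advance bounds the number of $\epsilon$-improving steps) is workable in principle and is a combinatorial cousin of what the paper does, but the step you yourself flag as the hard part --- strict lex-increase of the sorted bucket vector $\Phi$ --- is exactly where the entire content of the lemma lies, and your proposal neither proves it nor points at a tool that could. First, the reduction you sketch rests on a false principle: post-composing with a nondecreasing map does not preserve even \emph{weak} lex order of sorted vectors. For instance, for sorted radii vectors $(1,5) <_{\mathrm{lex}} (2,3)$ and the nondecreasing map $f$ with $f(1)=f(2)=1$, $f(3)=3$, $f(5)=5$, the images satisfy $(1,5) >_{\mathrm{lex}} (1,3)$; so ``raw lex advance of the radii plus monotonicity of $\lceil\log_\epsilon(\cdot)\rceil$'' yields nothing, not even the weak inequality you claim before worrying about ties. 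Second, the repair you propose --- excluding cancellation via the strict cost decrement of the mover from Remark \ref{rem:cost-to-radius} --- cannot work: that remark relates the mover's own cost to the mover's own radius and says nothing about how her switch perturbs \emph{other} agents' radii, which is precisely what a cancellation scenario involves.

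What closes the gap is the structural case analysis that is the heart of the paper's proof: when $i$ switches from $o_1$ to $o_2$, only holders of $o_1$ or $o_2$ are affected; $o_1$-holders' radii cannot decrease; and any $o_2$-holder $k$ whose radius decreases ends at $r_k(t+1)=d_{\mathcal{G}}(k,i)\ge r_i(t+1)\ge \epsilon\, r_i(t)$, having started from $r_k(t)>d_{\mathcal{G}}(k,i)$. Hence every radius that drops both starts and ends at or above the mover's new radius, so every bucket that drops starts and ends at or above the mover's new bucket $b_i(t+1)\ge b_i(t)+1$; a cumulative-count argument over thresholds $c<b_i(t+1)$ then gives the strict lex-increase of $\Phi$, after which your multiset count $\binom{n+L}{L}=\mathcal{O}\bigl(\mathrm{poly}(n,D)\,D^{\log_\epsilon n}\bigr)$ finishes --- in fact without the extra ``$\mathcal{O}(n^2)$ slack,'' whose role in your write-up is incoherent: if every step strictly advances $\Phi$ no slack is needed, and if some steps do not, the counting argument fails outright. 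The paper avoids the lexicographic bookkeeping altogether by feeding the same three cases into the scalar potential $R(t)=\sum_{k} (r_k(t))^{-\log_\epsilon n}$, which is at most $n$ and decreases by at least $1/(n D^{\log_\epsilon n})$ per update, giving the stated bound. As written, your proposal is missing the one argument the lemma actually requires.
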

\begin{proof}
Given a profile $P(t)=(P_1(t),\ldots, P_n(t))$ at time $t$, let us denote the radii of the players by $r_1(t), \ldots, r_n(t)$, and let $n_k(t)$ be the number of players whose radii are $k$, for $k=1,2,\ldots,D$. Now, given that at time step $t$ player $i$ with $P_i(t)=o_1$ wants to change to $P_i(t+1)=o_2$, it means that at time step $t$, there is no resource of type $o_2$ in $B(i,\epsilon r_i)$ (Figure \ref{fig:lemma-alg}). Now, let us define a potential function $R(t)$ to be
\begin{align}\label{lemm:potential}
R(t):=\sum_{k=1}^{D}\frac{n_k(t)}{k^{\log_{\epsilon} n}}=\sum_{k=1}^{n}\frac{1}{(r_k(t))^{\log_{\epsilon} n}},
\end{align}
where the equality holds since by definition of $n_k(t)$, exactly $n_k(t)$ of the terms in $\sum_{k=1}^{n}\frac{1}{(r_k(t))^{\log_{\epsilon} n}}$ are equal to $\frac{1}{k^{\log_{\epsilon} n}}$. Moreover, one can easily see that $R(\cdot)$ is a nonnegative function which is upper bounded by $n$. We will show that after each time of running the dynamics, the value of the potential function given in \eqref{lemm:potential} decreases by at least $\frac{1}{n\|\{r_{\max}(t)\}\|_{\infty}^{\log_{\epsilon}n}}$, where $\|\{r_{\max}(t)\}\|_{\infty}=\max_{t\ge 0}\max_{i\in [n]}r_i(t)$. To see this, let us assume that node $i$ updates its radius from $r_i(t)$ to $r_i(t+1)\ge \epsilon r_i(t)$. Therefore, for some $k\in [n]$ we must have one of the following three cases:
\begin{itemize}
\item If $k=i$, then $r_i(t+1)\ge \epsilon r_i(t)$. This holds due to the dynamics rule, since node $i$ updates its radius from $r_i(t)$ to $r_i(t+1)$ if and only if $r_i(t+1)\ge \epsilon r_i(t)$.

\item If $r_k(t)< r_i(t+1)$, then $r_k(t+1)\ge r_k(t)$. To see this, note that if $r_k(t)<r_i(t+1)$, then $P_k(t)\neq o_2$. Moreover, either $P_k(t)= o_1$, where in this case by updating $P_i(t)$ from $P_i(t)=o_1$ to $P_i(t+1)=o_2$, the radius of node $k$ does not decrease, i.e., $r_k(t+1)\ge r_k(t)$, or $P_k(t)\neq o_1$, where in this case the radius of node $k$ remains the same, i.e., $r_k(t+1)=r_k(t)$.   

\item If $r_k(t)\ge r_i(t+1)$, then $r_i(t+1)\leq r_k(t+1)$. To show this, first note that if $P_k\neq o_1,o_2$, then the radius of node $k$ does not change after node $i$ updates, i.e., $r_k(t+1)=r_k(t)\ge r_i(t+1)$. Otherwise, either $P_k(t)=o_1$, which in this case $r_k(t+1)\ge r_k(t)\ge r_i(t+1)$ due to the fact that after update we have fewer number of $o_1$-resources, or $p_k(t)=o_2$ in which case the radius of player $k$ cannot decrease to less than the graphical distance between $k$ and $i$, thus $r_k(t\!+\!1)\!\ge\! r_k(t)\!\ge\! r_i(t\!+\!1)$.  
\end{itemize}
Using the fact that $r_i(t+1)\ge \epsilon r_i(t)$ and considering the above three possibilities, we can write

\begin{figure}[htb]
\vspace{-2.3cm}
\begin{center}
\includegraphics[totalheight=.3\textheight,
width=.45\textwidth,viewport=-150 0 700 800]{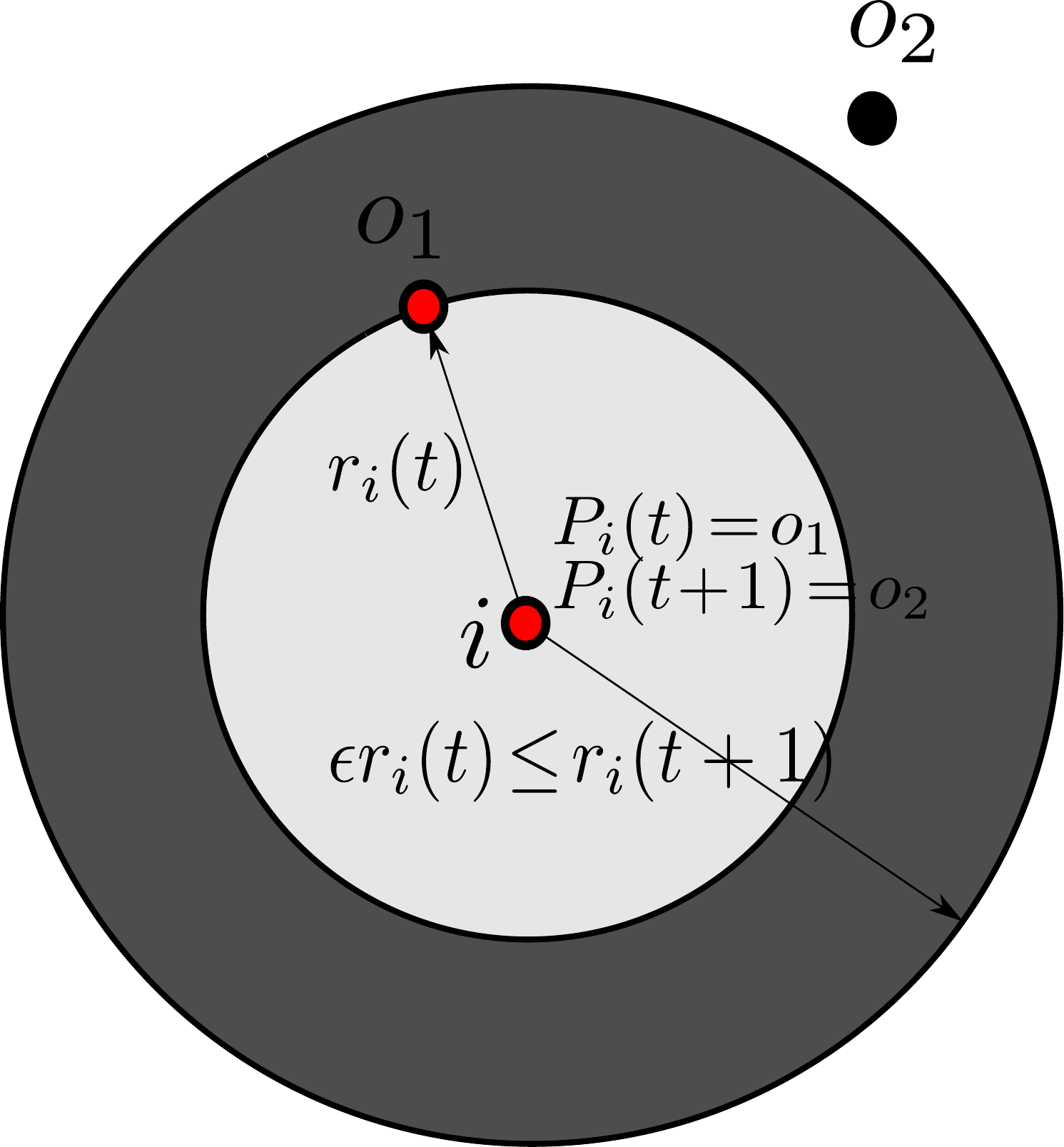} \hspace{0.4in}
\end{center}
\vspace{-0.4cm}\caption{Illustration of updating node $i$ in the Lemma \ref{lemm:Alg-lemma}. Note that there no other resource of type $o_2$ in $B(i,r_i(t\!+\!1))$.}
\label{fig:lemma-alg}
\end{figure}

\vspace{-0.7cm}
\begin{align}\nonumber
&R(t+1)-R(t)=\sum_{k=1}^{n}\left(\frac{1}{(r_k(t+1))^{\log_{\epsilon} n}}-\frac{1}{(r_k(t))^{\log_{\epsilon} n}}\right)\cr 
&=\frac{1}{(r_i(t+1))^{\log_{\epsilon} n}}-\frac{1}{(r_i(t))^{\log_{\epsilon}n}}\cr
&+\!\!\!\!\!\!\sum_{\{k\neq i: r_k(t)< r_i(t+1)\}}\left(\frac{1}{(r_k(t+1))^{\log_{\epsilon} n}}-\frac{1}{(r_k(t))^{\log_{\epsilon} n}}\right)\cr 
&+\!\!\!\!\!\sum_{\{k: r_k(t)\ge r_i(t+1)\}}\left(\frac{1}{(r_k(t+1))^{\log_{\epsilon} n}}-\frac{1}{(r_k(t))^{\log_{\epsilon} n}}\right)\cr
&\leq (\frac{1}{n}-1)\left(\frac{1}{(r_i(t))^{\log_{\epsilon}n}}\right)+0+\frac{|\{k: r_k(t)\ge r_i(t+1)\}|}{n(r_i(t))^{\log_{\epsilon}n}}\cr 
&\leq (\frac{1}{n}-1)\left(\frac{1}{(r_i(t))^{\log_{\epsilon}n}}\right) 
+\frac{n-2}{n(r_i(t))^{\log_{\epsilon}n}}\cr 
&=\frac{-1}{n(r_i(t))^{\log_{\epsilon}n}}\leq \frac{-1}{n\|\{r_{\max}(t)\}\|_{\infty}^{\log_{\epsilon}n}},  
\end{align}

\hspace{-0.34cm}where in the second to last inequality, and without any loss of generality, we can consider $|\{k: r_k(t)\ge r_i(t+1)\}|\leq n-2$. In fact, we argue that at most for $D$ instances we can have $|\{k: r_k(t)\ge r_i(t+1)\}|\ge n-2$, which does not really change the quasi-polynomial order of the termination time. Note that at least $i\notin \{k: r_k(t)\ge r_i(t+1)\}$, thus $|\{k: r_k(t)\ge r_i(t+1)\}|\leq n-1$. Moreover $|\{k: r_k(t)\ge r_i(t+1)\}|$ cannot be equal to $n-1$ more than $D$ steps, since by Lemma 1 in \cite{etesami2014pure} every time that $|\{k: r_k(t)\ge r_i(t+1)\}|= n-1$, then after updating node $i$ at the next time step the minimum index of the positive entries in $n(t+1)=(n_1(t+1),n_2(t+1),\ldots,n_D(t+1))$ will increase by at least 1, which cannot happen for more than $D$ steps.

Finally, since $R(\cdot)$ is upper bounded by $n$, $R(\cdot)$ cannot decrease more that $n^2\|\{r_{\max}(t)\}\|_{\infty}^{\log_{\epsilon}n}$ times, which shows that the dynamics must terminate after at most $\mathcal{O}\Big(n^2\|\{r_{\max}(t)\}\|_{\infty}^{\log_{\epsilon}n}\Big)$ steps. Moreover, since for all $t=0,1,\ldots$ and $i\in [n]$, $r_i(t)$ cannot exceed the diameter $D$ of the network,  $\|\{r_{\max}(t)\}\|_{\infty}\leq D$. This shows that the dynamics must terminate after at most $\mathcal{O}\Big(n^2D^{\log_{\epsilon}n}\Big)$ steps.           
\end{proof}

Next we introduce a useful definition.

\begin{definition}
Given an undirected network $\mathcal{G}=([n],\mathcal{E})$, and an allocation profile $P$, the \textit{resource-radius} of any node $i$ is the smallest positive integer $\gamma_i$ such that all the resources appear at least once in $B(i,\gamma_i)$ for that given profile $P$. 
\end{definition}

In what follows, we introduce an algorithm, called \textit{$\epsilon$-best response algorithm}, which in at most quasi-polynomial time can find an allocation profile which lies within a constant factor of an optimal allocation in the CSR game.

{\bf \textit{$\epsilon$-Best Response Algorithm 2:}} Given a network $\mathcal{G}=([n], \mathcal{E})$, a number  $\epsilon>1$, and an arbitrary initial allocation profile $P(0)$, at every time instance we select an agent $i$ who can increase its radius $r_i(t)$ by a factor of at least $\epsilon$, and let her to play her best response (ties are broken arbitrarily).         

\begin{example}\label{ex:alg}
Through this example, we illustrate how the $\epsilon$-best response algorithm works. Let $\epsilon=2$, and consider a network of 10 nodes and 4 available resources $O=\{o_1,o_2,o_3,o_4\}$. The initial profile of allocated resources has been illustrated in Figure \ref{fig:example-alg}. Let us assume that at the first time instant node $i$ has been selected by the algorithm. Since the radius of node $i$ in the initial profile is 1, i.e., $r_i(0)=1$, among the resources whose distances to node $i$ are at least $\epsilon r_i(0)=2\times 1$(in this example $\{o_3,o_4\}$), she will play her best response, i.e., $o_3$. Therefore, $P_i(1)=o_3$. Now at the second time instant, and given that node $j$ is selected, since $r_j(1)=1$, among the resources whose distances to $j$ are at least $\epsilon r_j(1)=2\times 1$ (in this example only $\{o_4\}$), she will play her best response, i.e., $P_j(2)=o_4$. Note that if the set of available resources where an agent is supposed to choose her best resource from is empty at some time instant, then she will not update.
     
\begin{figure}[htb]
\vspace{-1.75cm}
\begin{center}
\vspace{-0.75cm}
\includegraphics[totalheight=.3\textheight,
width=.45\textwidth,viewport=0 0 650 600]{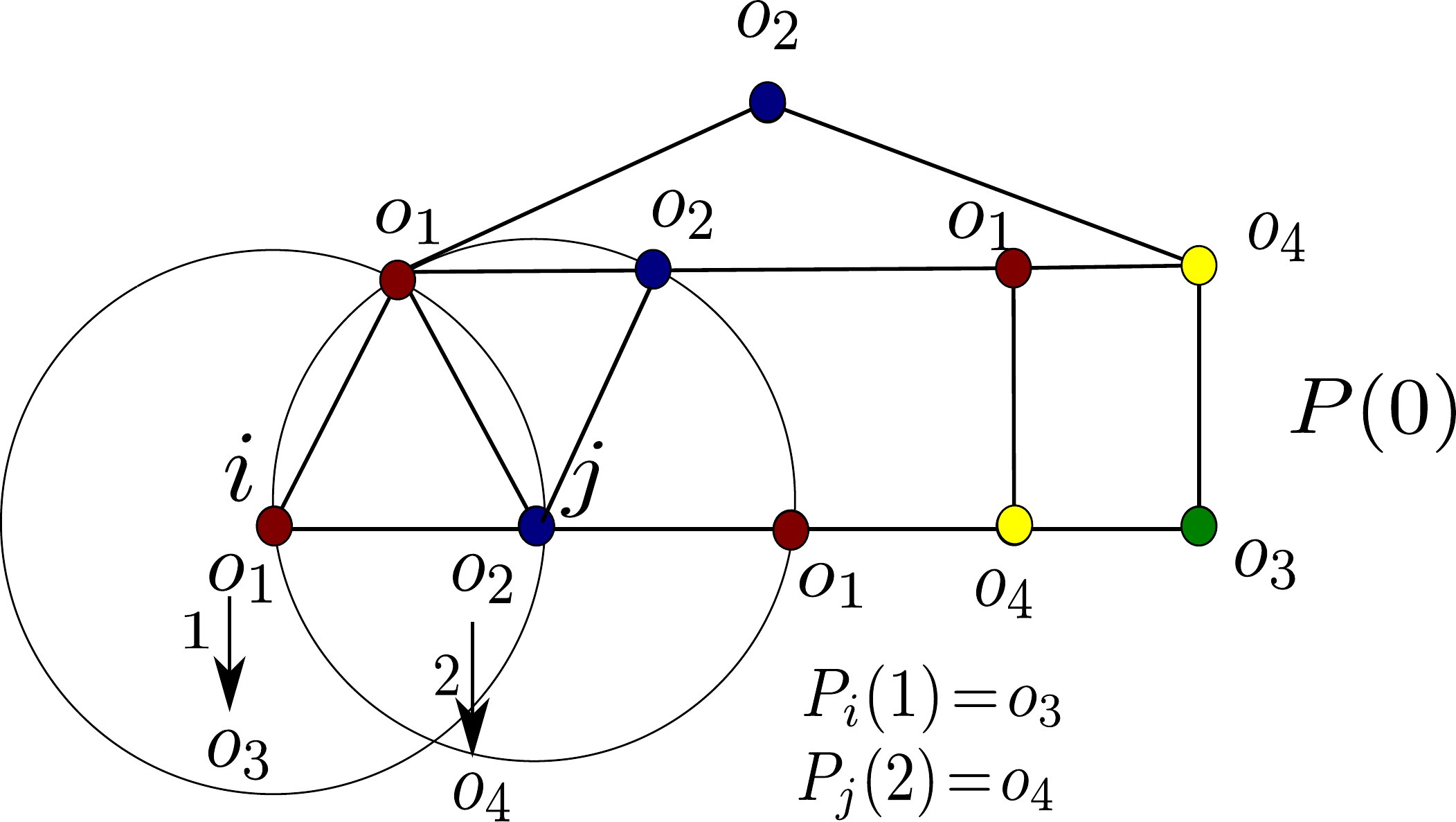} \hspace{0.4in}
\end{center}
\vspace{-0.2cm}\caption{Resource allocation in the $\epsilon$-best response algorithm (Example \ref{ex:alg}).}
\label{fig:example-alg}
\end{figure}   
\end{example}

In the following theorem we prove that the allocation profile reached after executing Algorithm 2 is a constant approximation of the optimal allocation in the CSR game.

\begin{theorem}\label{thm:approximation}
The $\epsilon$-best response algorithm provides a $(2\epsilon+1)$-approximation of the optimal allocation after at most $\mathcal{O}\Big(n^2D^{\log_{\epsilon}n}\Big)$ steps.
\end{theorem}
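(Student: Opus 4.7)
The plan is to mirror the price-of-anarchy proof (Theorem~\ref{thm:Price-of-Anarchy}), replacing the equilibrium condition by the weaker condition that no $\epsilon$-improving move exists, and then combine the resulting approximation bound with the termination bound from Lemma~\ref{lemm:Alg-lemma}. The running-time half of the statement is immediate: Lemma~\ref{lemm:Alg-lemma} already guarantees that any sequence of $\epsilon$-best-response updates must stop after $\mathcal{O}\!\left(n^{2}D^{\log_{\epsilon}n}\right)$ steps, so I only need to analyze the terminal profile.

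Let $P$ denote the allocation at termination and let $r_i$ denote the radius of agent $i$ in $P$. The defining feature of $P$ is that, for every agent $i$ and every resource $o'\neq P_i$, the nearest holder of $o'$ in $P$ lies at distance strictly less than $\epsilon r_i$ from $i$; otherwise agent $i$ could switch to $o'$ and increase her radius by a factor of at least $\epsilon$, contradicting the termination of Algorithm~2. Combined with the obvious fact that $P_i$ itself appears at $i$, this says that \emph{all} resources are present in $B(i,\epsilon r_i)$ for every $i$. This plays exactly the role that the inclusion ``all resources appear in $B(i,r_i^{*})$'' played in the proof of Theorem~\ref{thm:Price-of-Anarchy}.

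Now I would repeat the argument of Theorem~\ref{thm:Price-of-Anarchy} essentially verbatim. Define $\hat{r}_i$ to be the smallest integer for which $B(i,\hat{r}_i)$ contains two nodes carrying the same resource; pick such a pair $j_0,k_0\in B(i,\hat{r}_i)$ with $P_{j_0}=P_{k_0}$. By the triangle inequality, $r_{j_0}\leq d_{\mathcal{G}}(j_0,k_0)\leq 2\hat{r}_i$. Applying the ``all resources in $B(j_0,\epsilon r_{j_0})$'' property to the node $j_0$ and using $B(j_0,2\epsilon\hat{r}_i)\subseteq B(i,(2\epsilon+1)\hat{r}_i)$, I conclude that every resource appears in $B(i,(2\epsilon+1)\hat{r}_i)$. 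Moreover, the minimality of $\hat{r}_i$ forces the resources in $B(i,\hat{r}_i-1)$ to be pairwise distinct, so $i$ can reach $|B(i,\hat{r}_i-1)|-1$ distinct other resources without leaving $B(i,\hat{r}_i-1)$, and each of the remaining $|O|-|B(i,\hat{r}_i-1)|$ resources is reachable within $(2\epsilon+1)\hat{r}_i$. This yields
\begin{align*}
C_i(P)\leq \sum_{j\in B(i,\hat{r}_i-1)}\!\!\! d_{\mathcal{G}}(i,j)+(2\epsilon+1)\hat{r}_i\bigl(|O|-1-|B(i,\hat{r}_i-1)|\bigr).
\end{align*}

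For the lower bound on $C_i(P^{o})$ I would reuse the geometric argument from the proof of Theorem~\ref{thm:Price-of-Anarchy}, which is oblivious to what profile one looks at: in any allocation, agent $i$ must pay for $|O|-1$ other resources held by $|O|-1$ distinct nodes, and the $(|B(i,\hat{r}_i-1)|{+}k)$-th closest node lies outside $B(i,\hat{r}_i-1)$, hence at distance at least $\hat{r}_i$. This gives
\begin{align*}
C_i(P^{o})\geq \sum_{j\in B(i,\hat{r}_i-1)}\!\!\! d_{\mathcal{G}}(i,j)+\hat{r}_i\bigl(|O|-1-|B(i,\hat{r}_i-1)|\bigr),
\end{align*}
so the ratio $C_i(P)/C_i(P^{o})$ is bounded above by $2\epsilon+1$ termwise. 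Summing over $i$ produces $C(P)\leq(2\epsilon+1)\,C(P^{o})$, and Lemma~\ref{lemm:Alg-lemma} supplies the claimed step count. The only subtle point, and the one I would be most careful about, is the quantitative ``termination'' assertion in the second paragraph: the algorithm lets an agent update only when her radius strictly jumps by a factor $\epsilon$, so at termination every alternative resource is at distance $<\epsilon r_i$; this strict inequality does not affect the bound since it is used only to place every resource inside $B(i,\epsilon r_i)$, and the subsequent ball inclusions are monotone in this radius.
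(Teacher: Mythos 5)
Your proof is correct and follows essentially the same route as the paper's: the step count is taken directly from Lemma~\ref{lemm:Alg-lemma}, and the $(2\epsilon+1)$ ratio comes from the terminal property that every resource appears in $B(j,\epsilon r_j)$ for every $j$, fed into the same two-sided cost comparison used for the price of anarchy. The only difference is bookkeeping: you anchor on the repetition radius $\hat{r}_i$ from Theorem~\ref{thm:Price-of-Anarchy} and deduce coverage at scale $(2\epsilon+1)\hat{r}_i$, whereas the paper anchors on the resource-radius $\gamma_i$ and deduces distinctness at scale $\frac{\gamma_i-1}{2\epsilon+1}$; these are dual formulations of the same argument (and the small discrepancy between the $|O|-|B(i,\hat{r}_i-1)|$ count in your text and the $|O|-1-|B(i,\hat{r}_i-1)|$ coefficient in your display matches the paper's own convention and does not affect the $(2\epsilon+1)$ bound).
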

\begin{proof}
First, we note that by Lemma \ref{lemm:Alg-lemma} the $\epsilon$-best response algorithm terminates after at most $\mathcal{O}\Big(n^2D^{\log_{\epsilon}n}\Big)$ steps. Therefore, we only need to show that the final profile $\overline{P}$ is indeed within a constant factor of the optimal allocation. For any arbitrary $i\in [n]$, suppose that $\gamma_i$ is the resource-radius of node $i$ at the final profile $\overline{P}$. We now claim that $B(i, \frac{\gamma_i-1}{2\epsilon+1})$ contains only different resources. To see this, let us assume by contradiction that there are two nodes $j,j'\in B(i, \frac{\gamma_i-1}{2\epsilon+1})$ which share the same resource, i.e., $\overline{P}_j=\overline{P}_{j'}$. Since $j,j'\in B(i, \frac{\gamma_i-1}{2\epsilon+1})$, $d_{\mathcal{G}}(j,j')\leq \frac{2(\gamma_i-1)}{2\epsilon+1}$. This shows that $r_j\leq \frac{2(\gamma_i-1)}{2\epsilon+1}$, and hence, $\epsilon r_j\leq \frac{2\epsilon(\gamma_i-1)}{2\epsilon+1}$. In particular, since $d_{\mathcal{G}}(j,i)\leq \frac{\gamma_i-1}{2\epsilon+1}$, $B(j,\epsilon r_j)\subseteq B(i,\frac{2\epsilon(\gamma_i-1)}{2\epsilon+1}+\frac{\gamma_i-1}{2\epsilon+1})=B(i,\gamma_i-1)$. Since we assumed that $\overline{P}$ is the final allocation of the algorithm, every resource must appear at least once in $B(j,\epsilon r_j)$, as otherwise node $j$ can update its resource to another one which is outside of $B(j,\epsilon r_j)$. But, by definition of resource-radius $\gamma_i$, we know that for the profile $\overline{P}$ there exists at least one resource which is missing in $B(i,\gamma_i-1)$. This is in contradiction with $B(j,\epsilon r_j)\subseteq B(i,\gamma_i-1)$, which implies that all the resources in $B(i, \frac{\gamma_i-1}{2\epsilon+1})$ are different.
          
Finally, as we have shown above, for every agent $i$, the final profile $\overline{P}$ at the termination of Algorithm 2 has the property that all the vertices in $B(i,\frac{\gamma_i-1}{2\epsilon+1})$ have different resources, while all the resources appear at least once in $B(i,\gamma_i)$. Thus we have 
\begin{align}\nonumber
C_i(\overline{P})\leq\!\!\!\!\!\!\sum_{j\in B(i,\frac{\gamma_i-1}{2\epsilon+1})}\!\!\!\!\!d_{\mathcal{G}}(i,j)+\gamma_i\left(\!|O|\!-\!1\!-\!|B(i,\frac{\gamma_i-1}{2\epsilon+1})|\!\right)\!\!,
\end{align}
where the inequality holds due to the definition of $\gamma_i$. On the other hand, for the cost of node $i$ in the optimal placement, we can write
\begin{align}\nonumber
C_i(P^o)&\ge\!\!\!\!\!\!\!\!\sum_{j\in B(i,\frac{\gamma_i-1}{2\epsilon+1})}\!\!\!\!\!d_{\mathcal{G}}(i,j)\!+\!(\frac{\gamma_i}{2\epsilon\!+\!1})\!\left(\!|O|\!-\!1\!-\!|B(i,\frac{\gamma_i-1}{2\epsilon\!+\!1})|\!\right)\!, 
\end{align} 
where the above inequality holds since node $i$ in the optimal allocation $P^o$ has to pay at least $\sum_{j\in B(i,\frac{\gamma_i-1}{2\epsilon+1})}d_{\mathcal{G}}(i,j)$ for the first $|B(i,\frac{\gamma_i-1}{2\epsilon+1})|$ closest resources, and to pay an integer cost strictly larger than $\frac{\gamma_i-1}{2\epsilon+1}$ (at least $(\frac{\gamma_i}{2\epsilon+1})$), for the remaining $(|O|-1-|B(i,\frac{\gamma_i-1}{2\epsilon+1})|)$ resources. By comparing the above two relations, it is not hard to see that for all $i\in [n]$, we have $C_i(\overline{P})\leq (2\epsilon+1)C_i(P^o)$. Summing all of these inequalities for $i\in [n]$, we get $C(\overline{P})\leq (2\epsilon+1)C(P^o)$.                  
\end{proof} 

\begin{corollary}
Replacing $\epsilon=e$ in the result of Theorem \ref{thm:approximation}, one can easily see that Algorithm 2 gives us an allocation profile within a constant factor of the optimal allocation and any NE after at most $\mathcal{O}\left(n^2D^{\ln n}\right)=\mathcal{O}\left(n^{2+\ln D}\right)$ steps.  
\end{corollary}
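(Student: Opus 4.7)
The plan is to simply substitute $\epsilon = e$ into Theorem \ref{thm:approximation} and simplify the two expressions that result (the approximation factor and the running time), then deduce the NE comparison by using optimality of $P^o$.

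First I would handle the approximation factor: Theorem \ref{thm:approximation} guarantees $C(\overline{P}) \leq (2\epsilon+1)\,C(P^o)$, so with $\epsilon=e$ this becomes $C(\overline{P}) \leq (2e+1)\,C(P^o)$, which is a constant-factor approximation since $2e+1$ does not depend on $n$ or $D$. To transfer the bound to any pure-strategy Nash equilibrium $P^*$, I would note that by the very definition of the optimal allocation $C(P^o) \leq C(P^*)$, hence
\begin{equation*}
C(\overline{P}) \leq (2e+1)\,C(P^o) \leq (2e+1)\,C(P^*),
\end{equation*}
so $\overline{P}$ is also a constant-factor approximation to any NE.

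Next I would simplify the running time. Theorem \ref{thm:approximation} gives $\mathcal{O}\bigl(n^2 D^{\log_\epsilon n}\bigr)$, and with $\epsilon = e$ we have $\log_\epsilon n = \ln n$, so the bound becomes $\mathcal{O}\bigl(n^2 D^{\ln n}\bigr)$. The key identity is $D^{\ln n} = e^{(\ln D)(\ln n)} = \bigl(e^{\ln n}\bigr)^{\ln D} = n^{\ln D}$, which turns the bound into $\mathcal{O}\bigl(n^2 \cdot n^{\ln D}\bigr) = \mathcal{O}\bigl(n^{2+\ln D}\bigr)$.

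There is no real obstacle here since both conclusions follow by direct substitution; the only non-routine manipulation is the exponent swap $D^{\ln n} = n^{\ln D}$, which follows from the symmetry of $\ln D \cdot \ln n$. The choice $\epsilon = e$ is essentially optimal for this family of bounds because, among all $\epsilon > 1$, it is the base that makes the exponent $\log_\epsilon n$ match the natural logarithm, keeping the quasi-polynomial expression as tight as the proof method allows while still retaining a constant approximation factor.
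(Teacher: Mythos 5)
Your proposal is correct and is exactly the routine substitution the paper has in mind: plugging $\epsilon=e$ into Theorem \ref{thm:approximation} gives the $(2e+1)$-factor, the identity $D^{\ln n}=n^{\ln D}$ gives the stated running time, and $C(P^o)\leq C(P^*)$ transfers the bound to any NE. The closing remark about $\epsilon=e$ being ``essentially optimal'' is an unneeded aside (other choices of $\epsilon$ just trade approximation factor against the exponent), but it does not affect the correctness of the argument.
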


Before concluding, we state the following theorem for the case where the underlying network has a tree structure. 
We omit the proof here due to space limitation, but we include it in the full version of the paper \cite{etesami2015approximation}.

\begin{theorem}\label{thm:label-optimal}
Assume that the network in the CSR game is a tree of $n$ nodes. Then every optimal allocation $P^o$ must be a NE. Furtheremore, there is an algorithm which reaches an optimal allocation in only $n$ steps.    
\end{theorem}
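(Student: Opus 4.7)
My approach to the first part (every optimum is a Nash equilibrium on trees) is by contradiction. Suppose $P^o$ is optimal but some player $i$ can profitably deviate from $o_1:=P^o_i$ to $o_2$. Let $q^*=\sigma_i(P^o,o_1)$ be $i$'s nearest $o_1$-holder (other than itself) and $j^*=\sigma_i(P^o,o_2)$ be $i$'s nearest $o_2$-holder; by Remark~\ref{rem:cost-to-radius} the deviation is profitable exactly when $d_{\mathcal{G}}(i,j^*)>d_{\mathcal{G}}(i,q^*)$. Denote by $P'$ the deviated profile. Since only coordinate $i$ changes, for every $k\neq i$ the cost change decomposes as $\Delta C_k=\Delta_k(o_1)+\Delta_k(o_2)$, where $\Delta_k(o_1)\ge 0$ (since $i$ leaves the pool of $o_1$-holders) and $\Delta_k(o_2)\le 0$ (since $i$ joins the pool of $o_2$-holders).

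The plan is to establish the key inequality
\[
\sum_{k\neq i}\bigl(\Delta_k(o_1)+\Delta_k(o_2)\bigr)\le 0,
\]
which, combined with $\Delta C_i=d_{\mathcal{G}}(i,q^*)-d_{\mathcal{G}}(i,j^*)<0$, yields $C(P')<C(P^o)$ and contradicts the optimality of $P^o$. To prove it, I will root the tree at $i$ and decompose the other vertices into subtrees $T_1,\ldots,T_d$ pendant at $i$'s neighbors, with $j^*\in T_1$. The defining tree property — that every path between different subtrees must funnel through $i$ — is what I will use to bound each player $k$'s $o_1$-loss against its $o_2$-gain. A per-subtree case analysis distinguishes whether $T_\ell$ contains an internal $o_1$-holder and whether it contains an internal $o_2$-holder; subtrees $T_\ell$ with $\ell\ne 1$ that have no internal $o_2$-holder are where the tree property does the most work, since every player in such a $T_\ell$ previously reached an $o_2$-holder via $i$ and now enjoys savings of at least $d_{\mathcal{G}}(i,j^*)$. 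I expect this bookkeeping to be the main technical obstacle, and a simple four-cycle example (where the funneling fails because multiple routing paths exist) shows why the inequality genuinely requires acyclicity rather than holding on arbitrary graphs.

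For the algorithmic claim, Part~1 implies that any procedure producing a NE on a tree in at most $n$ player-updates yields an optimum. I propose the following algorithm: root the tree at an arbitrary vertex and traverse in post-order; when visiting a node $v$, assign $P_v$ to a best response against the current partial allocation on its already-processed subtree (together with a canonical placeholder assignment on the unprocessed portion). Each vertex is updated exactly once, so the algorithm runs in $n$ steps. Correctness — that the final profile is a NE — would be shown by induction on subtree depth, maintaining the invariant that after $v$'s turn no already-processed descendant of $v$ wishes to deviate, since such a deviation would only depend on the now-fixed assignments within the subtree rooted at $v$'s ancestors and $v$ itself is playing its best response. Verifying this invariant is the secondary obstacle; if a single post-order pass turns out to be insufficient, a two-sweep variant (leaves-to-root, then root-to-leaves) should recover the claim while keeping the total number of updates linear in $n$.
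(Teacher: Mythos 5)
Note first that the paper itself does not prove this theorem (it defers the proof to the full version), so your proposal has to stand on its own; it does not, for two separate reasons.

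For the first part, the ``key inequality'' $\sum_{k\neq i}\bigl(\Delta_k(o_1)+\Delta_k(o_2)\bigr)\le 0$ that your whole plan rests on is simply not a consequence of the hypotheses you intend to use (acyclicity plus profitability of $i$'s deviation), so no amount of per-subtree bookkeeping can establish it. Concrete counterexample: take the path $q-a-i-b_1-b_2-j$ and attach $M$ leaves $w_1,\ldots,w_M$ to $j$; let $O=\{o_1,o_2,o_3\}$ with $P_q=P_i=o_1$, $P_j=o_2$, and every other node holding $o_3$. Then $i$'s radius is $2$ while its nearest $o_2$-holder is at distance $3$, so switching $i$ to $o_2$ is profitable ($\Delta C_i=-1$); but each leaf $w_m$ now pays $d_{\mathcal{G}}(w_m,q)=6$ instead of $d_{\mathcal{G}}(w_m,i)=4$ for $o_1$ and gains nothing on $o_2$ (it sits next to $j$), so $\sum_{k\neq i}\Delta C_k=2M-1>0$ and the deviated profile is strictly worse for $M\ge 2$. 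The profile in this example is of course not optimal, which is exactly the point: the inequality can only hold, if at all, by invoking global optimality of $P^o$, whereas your subtree analysis never uses it. Moreover, your case analysis concentrates on subtrees with no internal $o_2$-holder (where players indeed gain at least $d_{\mathcal{G}}(i,j^*)$), but the dangerous case is the opposite one exhibited above: a subtree containing $o_2$ but no $o_1$-holder other than $i$, all of whose members lose exactly $r_i$ on $o_1$ and may gain nothing. A correct proof has to compare $P^o$ against a profile obtained by a different (possibly multi-node) exchange, not against the unilateral deviation itself; that is a genuinely different argument from the one you sketch.

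For the second part, your reduction is logically reversed. Part 1 asserts ``optimal $\Rightarrow$ NE,'' so a procedure that reaches a NE in $n$ updates does \emph{not} yield an optimal allocation; you would need the converse implication ``NE $\Rightarrow$ optimal'' on trees, which is neither stated in the theorem nor proved (and is not something you may assume). Consequently the post-order best-response scheme, even if it did produce a NE, would not establish the claim; and its correctness as a NE-finder is itself left as an unverified invariant (the ``canonical placeholder'' on unprocessed vertices is not specified, and a deviation by an already-processed vertex can depend on resources placed later, which is precisely the difficulty you defer). What the theorem requires is an explicit $n$-step construction of an \emph{optimal} allocation on the tree (e.g., a dynamic-programming/leaf-to-root placement argument), and that is missing from the proposal.
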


\section{Conclusion}\label{sec:conclusion}
In this paper, we have studied the binary-preference capacitated selfish replication (CSR) game over general networks. We have provided a distributed quasi-polynomial time algorithm $\mathcal{O}\Big(n^{2+\ln D}\Big)$ which can approximate any NE of the system within a constant factor. Moreover, we have shown that such games benefit from having a low price of anarchy.   

As an avenue for future research, one possibility is to study the dynamic version of the $CSR$ game, in the spirit of what has been discussed in \cite{etesami2014game} and  \cite{fabrikant2003network}.

\bibliographystyle{IEEEtran}
\bibliography{thesisrefs}
\end{document}